\newtheorem{theorem}{Theorem}
\newtheorem{remark}{Remark}
\newtheorem{proposition}{Proposition}
\newsavebox\myboxA
\newsavebox\myboxB
\newlength\mylenA
\newcommand*\xoverline[2][0.75]{%
    \sbox{\myboxA}{$\m@th#2$}%
    \setbox\myboxB\null
    \ht\myboxB=\ht\myboxA%
    \dp\myboxB=\dp\myboxA%
    \wd\myboxB=#1\wd\myboxA
    \sbox\myboxB{$\m@th\overline{\copy\myboxB}$}
    \setlength\mylenA{\the\wd\myboxA}
    \addtolength\mylenA{-\the\wd\myboxB}%
    \ifdim\wd\myboxB<\wd\myboxA%
       \rlap{\hskip 0.5\mylenA\usebox\myboxB}{\usebox\myboxA}%
    \else
        \hskip -0.5\mylenA\rlap{\usebox\myboxA}{\hskip 0.5\mylenA\usebox\myboxB}%
    \fi}
\newcommand{\R}{\mathbb{R}}
\newtheorem{definition}{Definition}
\newtheorem{assumption}{Assumption}
\let\NAT@parse\undefined
\title{Exploiting Model Sparsity in Adaptive MPC: \\ A Compressed Sensing Viewpoint}
\author{Monimoy Bujarbaruah$^*$, Charlott Vallon\thanks{Authors contributed equally to this work. The authors are with the MPC Lab, UC Berkeley, USA. E-mails: \tt\scriptsize{\{monimoyb,charlott\}@berkeley.edu} 
}
}
\begin{document}

\maketitle
  \thispagestyle{empty}
\pagestyle{empty}

\begin{abstract}
This paper proposes an Adaptive Stochastic Model Predictive Control (MPC) strategy for stable linear time-invariant systems in the presence of bounded disturbances. We consider multi-input, multi-output systems that can be expressed by a Finite Impulse Response (FIR) model. The parameters of the FIR model corresponding to each output are unknown but assumed \emph{sparse}. We estimate these parameters using the Recursive Least Squares algorithm. The estimates are then improved using set-based bounds obtained by solving the Basis Pursuit Denoising \cite{chen2001atomic} problem. Our approach is able to handle hard input constraints and probabilistic output constraints. Using tools from distributionally robust optimization, we reformulate the probabilistic output constraints as tractable convex second-order cone constraints, which enables us to pose our MPC design task as a convex optimization problem. 
The efficacy of the developed algorithm is highlighted with a thorough numerical example, where we demonstrate performance gain over the counterpart algorithm of \cite{bujarbaruahAdapFIR}, which does not utilize the sparsity information of the system impulse response parameters during control design.
\end{abstract}
\section{Introduction}
If the true dynamics of a system are uncertain, Adaptive Control strategies have been applied for meeting control objectives and ensuring system stability \cite{krstic1995nonlinear, ioannou1996robust, sastry2011adaptive}, typically under no output and input constraints. The uncertainty in these systems can be primarily attributed to two factors: $(i)$ model uncertainty (e.g. modeling mismatch and inaccuracies), and $(ii)$ exogenous disturbances (e.g. sensor noise). More recently, utilizing tools from classical Adaptive Control, Adaptive Model Predictive Control (MPC) \cite{tanaskovic2014adaptive, lorenzenAutomaticaAMPC, bujarbaruahAdapFIR, kohlerNonlAMPC} has established itself as a promising approach for control of uncertain systems subject to input and output constraints. 
For linear systems specifically, the literature on Adaptive MPC has extensively focused on either robust or probabilistic satisfaction of such imposed 
constraints on the system, using either state-space or input-output modeling. 

In \cite{soloperto2018learning, bujarArxivAdap} additive model uncertainties are considered with known system matrices, and imposed state and input constraints are robustly satisfied for all such realizable uncertainties. In \cite{lorenzenAutomaticaAMPC, hernandez2018stabilizing, kohler2019linear} robust state-input constraint satisfaction is extended to include both unknown system dynamics matrices and additive disturbances. The approach introduced in \cite{bujarArxivAdap} is also suited for satisfaction of probabilistic chance constraints on system states. Furthermore, the work in \cite{hewing2017cautious, koller2018learning, ostafew2014learning} uses Gaussian Process (GP) regression for real-time learning of an uncertain model and satisfies probabilistic state constraints with a traditional stochastic MPC \cite{farina2016stochastic} controller. Although such state-space modeling based Adaptive MPC controllers have proven to be effective, they involve construction of positive invariant sets \cite{kolmanovsky1998theory, blanchini1999set}, which can become computationally cumbersome. As a consequence, input-output modeling of systems has been opted in literature for proposing computationally efficient Adaptive MPC algorithms for certain applications (e.g. for stable, slow systems).

Adaptive MPC algorithms using input-output modeling of the system are presented in \cite{tanaskovic2014adaptive, 2017arXiv171207548T, bujarbaruahAdapFIR, FIRRoySmith19}, both for robust and probabilistic satisfaction of imposed input-output constraints. The  works  of  \cite{tanaskovic2014adaptive, 2017arXiv171207548T, FIRRoySmith19} deal with modeling errors in the Finite Impulse Response  (FIR)  domain, in the presence of a bounded additive disturbance, and prove  recursive  feasibility and stability \cite[Chapter~12]{borrelli2017predictive}  of the proposed robust Adaptive MPC approaches. These ideas are extended in \cite{bujarbaruahAdapFIR}, where a recursively feasible adaptive stochastic MPC algorithm is presented, demonstrating satisfaction of probabilistic output constraints and hard input constraints. The proposed approach in \cite{bujarbaruahAdapFIR} obtains a better performance compared to \cite{tanaskovic2014adaptive} measured in terms of closed-loop cost, owing to the allowance of output constraint violations with a certain (low) probability. 

In this paper, we build on the work of \cite{tanaskovic2014adaptive, bujarbaruahAdapFIR}, and propose an \emph{Adaptive Stochastic MPC} algorithm that considers probabilistic output constraints and hard input constraints for a Multi Input Multi Output (MIMO) system. Similar to \cite{bujarbaruahAdapFIR}, we consider an uncertain FIR model of the system that is subject to bounded disturbances with known mean and variance. The support for the set of all possible models, which we call the Feasible Parameter Set (FPS), is adapted at each timestep using a set membership based approach. In contrast to previous work \cite{bujarbaruahAdapFIR}, we additionally consider that the impulse response parameters corresponding to each output are sparse. Such sparse impulse response modeling can be motivated by \cite{benesty2006adaptive, etter1985identification} for MIMO systems. Our goal is to utilize this additional sparsity information to demonstrate performance improvement over the algorithm in \cite{bujarbaruahAdapFIR}. Our contributions can be summarized as follows:
\begin{itemize}
\item Offline before the control process, we compute a set containing all possible values of the unknown sparse FIR vectors corresponding to each output, with a very \emph{high} probability. This set, which we call the Feasible Sparse Parameter Set (FSPS) is computed using the Basis Pursuit Denoising \cite{chen2001atomic} problem.    

\item Online during the control process, we obtain a point estimate of the unknown system inside the intersection of the FPS and the FSPS, using a Recursive Least Squares (RLS) estimator. Using this estimated system, we propagate our nominal predicted outputs used in the MPC controller objective function to improve performance. Simultaneously, we ensure satisfaction of the output chance constraints for the unknown true system.

\item Through numerical simulations, we demonstrate that our algorithm exhibits better performance than the algorithm presented in \cite{bujarbaruahAdapFIR}.
\end{itemize}

\section{Problem Description}\label{sec:problem_des}

\subsection{System Modeling and Control Objective}\label{sec:model}
We consider stable linear time-invariant systems described by a Finite Impulse Response (FIR) model of the form
\begin{align} \label{sysmodel}
y(t) = H_a \Phi(t) + w(t),
\end{align}
where the number of inputs and outputs considered is $n_u$ and $n_y$, respectively. The FIR regressor vector of length $m$ is denoted by $\Phi(t) \in \mathbb{R}^{n_um} = [u_1(t-1), \dots, u_1(t-m),\dots,u_{n_u}(t-1),\dots,u_{n_u}(t-m)]^\top$, where $u_i(t)$ denotes the $i^{\mathrm{th}}$ input at time $t$. The matrix $H_a \in \mathbb{R}^{n_y \times n_um}$ is a matrix comprising of the impulse response coefficients that relate inputs to the outputs of the system. The disturbance vector $w(t) \in \mathbb{R}^{n_y}$ is assumed to be a zero-mean random variable with a known variance, component-wise bounded as
\begin{align}\label{disturbance_bound}
    \vert w_j(t)\vert \leq \bar w_j, \forall j = 1,2,\dots n_y,
\end{align}
where the $\bar w_j$ are assumed known. Finally, $y(t)\in\mathbb R^{n_y}$ is the measured output of the system.

Our goal is to control the output $y(t)$ while satisfying input and output constraints of the form
\begin{subequations} \label{constraints}
\begin{align}
C u(t) & \leq g, \quad t=0,1,\ldots, \label{contr_con}\\
\mathbb{P}\{E y(t) \leq p\} & \geq 1-\epsilon, \quad  t=0,1,\ldots, \label{eq:cc1} 
\end{align}
\end{subequations}
where $\epsilon\in(0,1)$ is the maximum allowed probability of output constraint violation. Following \cite{bujarbaruahAdapFIR}, we consider a single linear output chance constraint, meaning $E$ is a row vector and $p\in\R$. Notice that joint (linear) chance constraints \eqref{eq:cc1} can be reformulated into a set of individual (linear) chance constraints using Bonferroni's inequality, and can therefore also be addressed by our proposed framework.


\begin{assumption}\label{assump:1}
We assume that each row of impulse response matrix $H_a$ is sparse. Without loss of generality, we further assume that the sparsity index for each row, i.e. the number of nonzero entries, is at most $\bar{k}$. 
\end{assumption}
\subsection{Method Outline}
We assume in this paper that the system matrix $H_a$ in \eqref{sysmodel} is unknown. This paper proposes a method for identifying this unknown system matrix $H_a$, and using the estimate in a robust control formulation to safely regulate the constrained uncertain system. Our proposed method uses the following steps:
\begin{enumerate}
    \item \underline{Offline before the control process begins}: Use $q$ number of collected input sequence regressors $[\Phi_1, \Phi_2, \dots, \Phi_q]$ to compute a set $\mathcal{B}(H_a)$, called the Feasible Sparse Parameter Set (FSPS). We compute this set via the Basis Pursuit Denoising problem \cite{chen2001atomic}. The FSPS contains the true unknown model $H_a$, with a \emph{high} probability.
    
    \item \underline{Online during the control process}: At each timestep $t$, 
    \begin{enumerate}
        \item Obtain the current output measurement $y(t)$ and, using the known disturbance bounds \eqref{disturbance_bound}, update the time-varying set $\mathcal{F}(t)$, which we call the Feasible Parameter Set (FPS). The FPS is a set guaranteed to contain the true model $H_a$.
    
    \item Use the previous applied control inputs and measured outputs to construct an estimate $\mu_a(t)$ of $H_a$, lying in the intersection of $\mathcal{F}(t)$ and the offline-computed FSPS. The estimate is constructed using the Recursive Least Squares method. 
    
    \item Compute the input sequence that minimizes the objective function obtained with $\mu_a(t)$ while satisfying the input and output constraints \eqref{constraints} for all models in the FPS $\mathcal{F}(t)$. Apply the first computed control input and continue to step 2a.
    \end{enumerate}
    
\end{enumerate}

In the following Section~\ref{sec:backg}, we discuss steps 1-2(b). Step 2(c) is detailed in Section~\ref{control_synth}.


\section{Model Estimation and Adaptation}\label{sec:backg}
We approximate system \eqref{sysmodel} with the form
\begin{equation}\label{approxModel}
y(t)  = H(t) \Phi(t) + w(t),
\end{equation}
where our model $H(t) \in \mathbb{R}^{n_y \times n_um}$ is a random variable whose support we estimate online during the control process from the output measurements. The support for the set of all possible models $H(t)$ consistent with the recorded system data, which we call the Feasible Parameter Set (FPS), is guaranteed to contain the true model $H_a$. Based on the knowledge that system \eqref{sysmodel} has sparse impulse response properties, we also construct a Feasible Sparse Parameter Set (FSPS) offline by solving the Basis Pursuit Denoising Problem. During control run-time, a point estimate of $H_a$ is then computed to lie in the intersection of the offline-computed FSPS and the online-updated FPS. This estimate is then used in the control design. We decouple the offline and online phases of this design process and delineate the steps in detail in the following sections.  

\subsection{Offline: Construct the Feasible Sparse Parameter Set}\label{sec: fsps}
The Feasible Sparse Parameter Set (FSPS), denoted by $\mathcal{B}(H_a)$, is a function of the (unknown) true system response $H_a$, and is synthesized \emph{offline} utilizing the sparsity aspect of system responses from Assumption~\ref{assump:1}. Proposition~\ref{prop:sparse_bound} clarifies how this set is synthesized. We first introduce the following definition.

\begin{definition}[Restricted Isometry Property (RIP) \cite{candesTao}]
A matrix $\mathcal{A}$ satisfies the Restricted Isometry Property (RIP) of order $\bar{k}$, with constant $\delta \in [0, 1)$, if
\begin{align*}
    (1-\delta) \Vert x \Vert_2^2 \leq \Vert \mathcal{A}x \Vert_2^2 \leq (1+\delta) \Vert x \Vert_2^2,~\forall x~\textrm{$\bar{k}$ sparse}.
\end{align*}
The order-$\bar{k}$ restricted isometry constant $\delta_{\bar{k}}(\mathcal{A})$ is the smallest number $\delta$ such that the above inequality holds. 
\end{definition}

\begin{proposition}\label{prop:sparse_bound}
Suppose we collect $q$ output measurements offline. Suppose $\mathbf{y}_i = \mathcal{A}H^\top_{a_i} + \mathbf{w}_i $ for $i = 1,2,\dots,n_y$, where each $\mathbf{y}_i \in \mathbb{R}^{q \times 1}$ and $\mathcal{A} = [\Phi_1, \Phi_2, \dots, \Phi_q]^\top \in \mathbb{R}^{q \times n_u m}$, $\mathbf{w}_i \in \mathbb{R}^{q \times 1}$ with $\Vert \mathbf{w}_i \Vert_2 \leq \sqrt{q}\bar{w}_i$, and $H_{a_i} \in \mathbb{R}^{1 \times n_u m}$ denotes the $i^\mathrm{th}$ row of $H_a$. If $\delta_{2\bar{k}}(\mathcal{A}) < \sqrt{2}-1$, then any solution $\hat{\mathbf{x}}$ to the Basis Pursuit Denoising optimization problem
\begin{equation}\label{eq:sparse_recover}
    \begin{array}{llll}
        \displaystyle\min & \displaystyle \Vert \mathbf{x} \Vert_1 \\
        \text{s.t.} & \Vert \mathcal{A} \mathbf{x} - \mathbf{y}_i \Vert_2 \leq \sqrt{q}\bar{w}_i, ~(\textrm{denoted as}~\mathcal{B}(H_{a_i})) 
    \end{array}
\end{equation}
satisfies $\Vert \hat{\mathbf{x}} - H_{a_i} \Vert_2 \leq \bar{C}\sqrt{q}\bar{w}_i$ for some constant $\bar{C} \in \mathbb{R}$, and for all $i = 1,2,\dots,n_y$.
\end{proposition}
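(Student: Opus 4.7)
The plan is to invoke the now-classical stable recovery theorem for Basis Pursuit Denoising due to Cand\`es, since the conclusion is essentially a row-wise application of that result. First I would verify that each true row $H_{a_i}^\top$ is feasible for \eqref{eq:sparse_recover}: by construction $\mathcal{A}H_{a_i}^\top - \mathbf{y}_i = -\mathbf{w}_i$, and $\|\mathbf{w}_i\|_2 \leq \sqrt{q}\bar{w}_i$ by hypothesis, so $H_{a_i}^\top$ satisfies the BPDN constraint. Optimality of the minimizer $\hat{\mathbf{x}}$ then gives $\|\hat{\mathbf{x}}\|_1 \leq \|H_{a_i}^\top\|_1$, and the triangle inequality yields the tube bound $\|\mathcal{A}(\hat{\mathbf{x}} - H_{a_i}^\top)\|_2 \leq 2\sqrt{q}\bar{w}_i$.

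Setting $h = \hat{\mathbf{x}} - H_{a_i}^\top$, I would carry out the standard cone-decomposition argument. Let $T_0$ denote the support of $H_{a_i}^\top$, so that $|T_0| \leq \bar{k}$ by Assumption~\ref{assump:1}. Expanding $\|\hat{\mathbf{x}}\|_1 = \|H_{a_i}^\top + h\|_1$ via the reverse triangle inequality on $T_0$ and the triangle inequality on $T_0^c$ produces the cone condition $\|h_{T_0^c}\|_1 \leq \|h_{T_0}\|_1$. Next, partition $T_0^c$ into disjoint blocks $T_1, T_2, \ldots$ of size $\bar{k}$, sorted so that $T_j$ collects the coordinates of $h_{T_0^c}$ with the $j$-th largest magnitudes. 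The standard block-decay estimate then yields $\sum_{j \geq 2}\|h_{T_j}\|_2 \leq \bar{k}^{-1/2}\|h_{T_0^c}\|_1 \leq \|h_{T_0}\|_2 \leq \|h_{T_0 \cup T_1}\|_2$, via Cauchy--Schwarz.

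The main technical step, and the place where the hypothesis $\delta_{2\bar{k}}(\mathcal{A}) < \sqrt{2}-1$ is essential, is to combine the RIP on the set $T_0 \cup T_1$ (of cardinality at most $2\bar{k}$) with the tube bound on $\|\mathcal{A}h\|_2$. Writing $\mathcal{A}h = \mathcal{A}h_{T_0 \cup T_1} + \sum_{j \geq 2}\mathcal{A}h_{T_j}$, lower-bounding $\|\mathcal{A}h_{T_0 \cup T_1}\|_2^2 \geq (1-\delta_{2\bar{k}})\|h_{T_0 \cup T_1}\|_2^2$ by RIP, and estimating the cross terms $\langle \mathcal{A}h_{T_0 \cup T_1},\mathcal{A}h_{T_j}\rangle$ via the parallelogram identity together with the block-decay bound above, one arrives at an inequality of the form
\begin{align*}
\|h_{T_0 \cup T_1}\|_2 \leq \alpha \, \|h_{T_0 \cup T_1}\|_2 + \beta \sqrt{q}\bar{w}_i,
\end{align*}
where $\alpha$ and $\beta$ depend only on $\delta_{2\bar{k}}$. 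The threshold $\sqrt{2}-1$ is precisely what forces $\alpha < 1$, so rearranging and combining with $\|h\|_2 \leq \|h_{T_0 \cup T_1}\|_2 + \sum_{j \geq 2}\|h_{T_j}\|_2 \leq 2\|h_{T_0 \cup T_1}\|_2$ (again from the block-decay bound) produces $\|h\|_2 \leq \bar{C}\sqrt{q}\bar{w}_i$ for a universal $\bar{C}$ depending only on $\delta_{2\bar{k}}$. The main obstacle is the bookkeeping in these cone-and-tube estimates, but since the argument applies identically to each row $i \in \{1,\ldots,n_y\}$, the same $\bar{C}$ works uniformly across the $n_y$ BPDN problems.
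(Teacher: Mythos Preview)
Your proposal is correct and is essentially the same approach as the paper's: the paper's proof consists entirely of a citation to the Cand\`es stable-recovery theorem (specifically \cite[Theorem~3.5.1]{YiMaBook} and \cite[Theorem~1.1]{candes2006stable}, with the RIP hypothesis relaxed to $\delta_{2\bar{k}}(\mathcal{A}) < \sqrt{2}-1$), and what you have written is precisely the standard cone-and-tube argument that proves that cited theorem. You are simply unpacking the black box the paper invokes, so there is no substantive difference.
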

\begin{proof}
See \cite[Theorem~3.5.1]{YiMaBook}. The proof follows from the known result of \cite[Theorem~1.1]{candes2006stable}, with relaxation of $\delta_{4\bar{k}}(\mathcal{A}) < \frac{1}{4}$ to $\delta_{2\bar{k}}(\mathcal{A}) < \sqrt{2}-1$ .
\end{proof}
The set $\mathcal{B}(H_a)$ is obtained as $\mathcal{B}(H_a) = [\mathcal{B}(H_{a_1}), \dots, \mathcal{B}(H_{a_{n_y}})]^\top$. 
Note that this set $\mathcal{B}(H_a)$ is synthesized offline, as the regressor vectors $[\Phi_1, \Phi_2, \dots, \Phi_q]$ are required to come from a Gaussian distribution in order to ensure the RIP property of each matrix $\mathcal{A}_i$ for $i=1,2,\dots,n_y$. 
Such Gaussian inputs are not always allowable during control with system constraints \eqref{constraints}. Details on our choice of these offline regressors $[\Phi_1, \Phi_2, \dots, \Phi_q]$ to ensure $\delta_{2\bar{k}}(\mathcal{A}) < \sqrt{2}-1$ with \emph{high} probability are described in the Appendix. 
\begin{remark}
As pointed out in the proof of \cite[Theorem~3.5.1]{YiMaBook}, a possible choice of the numerical constant $\bar{C}$ in Proposition~\ref{prop:sparse_bound} is $\bar{C} = \frac{2}{\sqrt{\lambda}}$, with $\sqrt{\lambda} = \frac{1-\delta_{2\bar{k}}(1+\sqrt{2})}{\sqrt{2(1+\delta_{2\bar{k}})}}$. However, since $\delta_{2\bar{k}}$ is not \emph{exactly} known, computing $\bar{C}$ and hence $\mathcal{B}(H_a)$ accurately is not possible. We see that $\bar{C} \rightarrow 2\sqrt{2}$ as $\delta_{2\bar{k}} \rightarrow 0$. Therefore offline regressor vectors $[\Phi_1, \Phi_2, \dots, \Phi_q]$ should be chosen to ensure $\delta_{2\bar{k}} < \bar{\delta}$, with $\bar{\delta} \ll \sqrt{2}-1$. Under such choice of the offline regressors as shown in the Appendix, we pick the constant $\bar{C} \approx 2\sqrt{2}$.    
\end{remark}

\subsection{Online: Update the Feasible Parameter Set}\label{sec:FPSU}
Following \cite{tanaskovic2014adaptive, bujarbaruahAdapFIR}, a set-membership identification method is used for updating the time-varying FPS, denoted by $\mathcal{F}(t)$. The initialization of $\mathcal F(0)$ is done considering the fact that the true system (\ref{sysmodel}) is stable. We update the FPS as given by
\begin{multline}\label{support}
    \mathcal{F}(t)= \mathcal{F}(t-1) \cap \{H(t): H(t) \Phi(t)  \leq y (t) + \bar w\} \cap \{H(t): -H(t) \Phi(t) \leq -y(t) + \bar w \},
\end{multline}
where $\bar w = [\bar w_1,\ldots,\bar w_{n_y}]^\top$ is the bound of the additive disturbance given by \eqref{disturbance_bound}. 
In order to bound the computational complexity of \eqref{support} over time, an alternative algorithm to compute \eqref{support} is presented in \cite{tanaskovic2014adaptive}.

\subsection{Online: Obtain Point Estimate $\mu_a(t)$}\label{sec: kf}
We rewrite \eqref{approxModel} as
\begin{align*}
y(t) & = \bm{\Phi}(t)\mathbf{H}(t) + w(t),
\end{align*}
where $\bm{\Phi}(t) \in \mathbb{R}^{n_y \times n_yn_um}$ and $\mathbf{H}(t) \in \mathbb{R}^{n_yn_um \times 1}$ are shown in the Appendix. Furthermore, let $\sigma^2_w$ be the variance of the disturbance $w(t)$. Let the initial prior mean and variance estimates for true system be $\mu_{\mathbf{a}}(0)$ and $\sigma^2_{\mathbf{a}}(0)$ respectively. Now, the conditional mean and variance estimates, given measurements up to $y(t)$, can be obtained using the Recursive Least Squares method \cite[Sec.~(3.1)]{anderson1979}. These estimates may differ from the true conditional distribution parameters, as $w(t)$ is not necessarily assumed to be Gaussian.

We ensure that the mean point estimate $\mu_\mathbf{a}(t)$ at any time instant $t$ is chosen as a point contained in a set $\mathcal{F}_p(t)$, that is, $\mu_a(t) \in \mathcal{F}_p(t)$, with  
\begin{align}\label{eq:point_estimDom}
    \mathcal{F}_p(t) = \mathcal{F}(t) \cap \mathcal{B}(H_a).
\end{align}
One way of obtaining such a constrained estimate of $H_a$ in $\mathcal{F}_p(t)$ is to project the mean. As shown in \cite{bujarbaruahAdapFIR}, this is achieved by solving 
\begin{align}\label{eq:proj}
    \mu_{\mathbf{a}}(t)= \arg\min_{X\in \mathcal{F}_p(t)} (X-\mu_{\mathbf{a}}(t))^\top M (X-\mu_{\mathbf{a}}(t)),
\end{align}
where $M = (\sigma^2_{\mathbf{a}}(t))^{-1} \succ 0$. The mean in matrix form, that is, $\mu_a(t) \in \mathbb{R}^{n_y \times n_um}$ is obtained by reorganizing $\mu_{\mathbf{a}}(t) \in \mathbb{R}^{n_yn_um \times 1}$ into $n_um$ columns. This provides the minimum mean squared error estimate (with any linear estimator) of the true system $H_a$. Note that \eqref{eq:proj} is a convex optimization problem, as the set $\mathcal{F}_P$ in \eqref{eq:point_estimDom} is an intersection of two convex sets. We can further obtain a polytopic $\mathcal{F}_p$ by outer approximation of $\mathcal{B}(H_a)$ with the infinity norm.

\begin{remark}\label{rem:diff_with_acc18}
In \cite{bujarbaruahAdapFIR} the set $\mathcal{F}_p(t)$ in \eqref{eq:point_estimDom} is set as the FPS $\mathcal{F}(t)$ for all timesteps $t \geq 0$. Thus \cite{bujarbaruahAdapFIR} does not utilize the sparsity information of rows of $H_a$ to construct the nominal point model estimate $\mu_a(t)$. The construction of $\mathcal{B}(H_a)$ in \eqref{eq:point_estimDom} with our proposed approach utilizes such sparsity information from Assumption~\ref{assump:1}. However, this comes with additional offline computation of $\mathcal{B}(H_a)$, as shown in Proposition~\ref{prop:sparse_bound}. 
\end{remark}


\section{Control Synthesis}\label{control_synth}
\subsection{Prediction Model}\label{sec:prob}

Let $N>m$ be the prediction horizon for a predictive controller for system \eqref{sysmodel}. We denote the predicted system outputs at time $t$ by $y(k|t) = H(t) \Phi(k|t) + w(k)$, for some $H(t)\in \mathcal{F}(t)$. Similarly, $\Phi(k|t)$ denotes the \emph{predicted regressor vector}, for $k\in \{t+1,\dots,t+N\}$, and is computed as
\begin{equation}\label{phi_hor}
    \Phi(k|t)= W\Phi(k-1|t) + Z u(k-1|t), 
\end{equation}
where, the matrices $W$ and $Z$  are as reported in the Appendix. With these predicted regressor vectors, the estimated system  $\mu_a(t)$ obtained in Section~\ref{sec: kf} is used to propagate the nominal predicted outputs as $\hat{y}(k+1|t) = \mu_a(t) \Phi(k+1|t)$, for all $k \in \{t,\dots,t+N-1\}$. This is shown in the optimization problem presented in Section~\ref{sec:mpc_prob}. 

\subsection{Reformulation of Chance Constraints}
Within each prediction horizon we enforce $\mathbb P \{ E y(k|t) \leq p \} \geq 1-\epsilon$, where $y(k|t)$ is a function of some $H(t) \in \mathcal F(t)$. Therefore, to ensure satisfaction of (\ref{eq:cc1}) despite uncertainty in the true system, we must satisfy the constraint for all $H(t) \in \mathcal{F}(t)$. Using the theory of distributionally robust optimization \cite{calafiore2006distributionally, zymler2013distributionally}, we can conservatively approximate the output chance constraints \eqref{eq:cc1} as
\begin{align}\label{socp1}
\kappa_{\epsilon} \sqrt{{\bar{\Phi}}^\top(k|t)\Gamma{\bar{\Phi}}(k|t)} + \Phi^\top(k|t)\bar{E}\mathbf{H}(t) -p \leq 0, ~~ \forall H(t) \in \mathcal{F}(t),
\end{align}
where we have $k\in \{t+1,\dots,t+N\}$, $\kappa_{\epsilon} =\sqrt{\frac{1-\epsilon}{\epsilon}}$ and $\bar \Phi(k|t) = [\Phi^\top(k|t) \quad 1 \quad 1]^\top$. Here, $\Gamma \succeq 0$ is an appended covariance matrix shown in the Appendix. As $\mathcal{F}(t)$ is a polytope, \eqref{socp1} can be succinctly written as
\begin{equation}\label{socp}
    \kappa_{\epsilon} \sqrt{{\bar{\Phi}}^\top(k|t)\Gamma{\bar{\Phi}}(k|t)} + \Phi^\top(k|t)\bar{E}f^j(t) -p \leq 0,
\end{equation}
where $f^j(t)$ denote the vertices of the polytope $\mathcal{F}(t)$.

\begin{remark}\label{rem:fps_robust}
The reformulated chance constraints \eqref{socp} are \emph{robustly} satisfied for all $H(t) \in \mathcal{F}(t)$, and only the choice of the point model estimate $\mu_a(t)$ depends on the set $\mathcal{B}(H_a)$, as shown in \eqref{eq:point_estimDom}. Hence, satisfaction of \eqref{eq:cc1} is guaranteed by \eqref{socp}, despite $\mathbb{P}(H_a \in \mathcal{B}(H_a)) \neq 1$. 
\end{remark}

\subsection{MPC Problem}\label{sec:mpc_prob}
We solve the following optimization problem for given weight matrices $Q \in \mathbb{R}^{n_y \times n_y},S \in \mathbb{R}^{n_u \times n_u} \succ 0$: 
\begin{equation}\label{mpc_problem}
    \begin{array}{llll}
        \displaystyle\min_{U(t)} & \displaystyle \sum_{k=t}^{t+N-1}  [\hat{y}^\top(k|t)Q\hat{y}(k|t)+u^\top(k|t)Su(k|t)] + \hat{y}^\top(t+N|t)Q\hat{y}(t+N|t) \\
        \text{s.t.} & \hat{y}(k+1|t) = \mu_a(t)\Phi(k+1|t), \\[0.7ex]
         & \hat{y}(t|t) = y(t), \\[0.7ex]
        & Cu(k|t) \leq g, \\[0.7ex]
        & \Phi(t+N|t)  =W\Phi(t+N|t)+Z u(t+N-1|t), \\ [0.7ex]
        & \kappa_{\epsilon} \sqrt{{\bar{\Phi}}^\top(k+1|t)\Gamma{\bar{\Phi}}(k+1|t)} + \Phi^\top(k+1|t)\bar{E}f^j(t)\leq p, \\ [0.7ex]
        & \forall k = t, \ldots, t+N-1, \\ [0.7ex]
        & \forall f^j(t) \in \textrm{vertex}(\mathcal{F}(t)),~\mu_a(t) \in \mathcal{F}(t) \cap \mathcal{B}(H_a),
    \end{array}
\end{equation}
where $U(t) = [u(t|t)^\top, u(t+1|t)^\top, \dots, u(t+N-1|t)^\top]^\top$, and the regressor $\Phi(k|t)$ is as in \eqref{phi_hor}. We have included the terminal constraint on the regressor vector as given in \cite{tanaskovic2014adaptive}
\begin{align}\label{eq:trc}
\Phi(t+N|t)=W\Phi(t+N|t)+Z u(t+N-1|t).
\end{align}
This means the terminal regressor corresponds to a steady state (i.e. the last $m$ control inputs in a horizon are kept constant). Problem \eqref{mpc_problem} is a convex optimization problem and can be solved  with existing solvers. After solving \eqref{mpc_problem}, we apply the first input 
\begin{equation}\label{eq:RHC}
    u(t)=u^\star(t|t)
\end{equation}
to system \eqref{sysmodel} in closed-loop. We then re-solve \eqref{mpc_problem} at timestep $t+1$ with new estimated data $\mu_a(t+1)$ and $\mathcal F(t+1)$. This yields a receding-horizon control scheme. The resulting algorithm is summarized in Algorithm~\ref{alg1}. 
\begin{algorithm}
    \caption{
Adaptive Stochastic MPC: Sparse-FIR MIMO Systems
    }
    \label{alg1}
  \begin{algorithmic}[1]
      \Statex \hspace{-1.2em}\textbf{Initialize:} $\mathcal{F}(0), \mu_\mathbf{a}(0), \sigma_\mathbf{a}(0)$.
      \Statex 
      \hspace{-1.2em}\textbf{Inputs:} 
      $q, \bar{w}, \bar{k}, t_\mathrm{end}$ 
      \vspace{1.2mm}
      \Statex \hspace{-1.2em} \emph{begin Basis Pursuit Denoising (offline)} 

      \STATE Construct offline regressors $[\Phi_1,\Phi_2,\dots,\Phi_q]$ such that operator $\mathcal{A}$ in Proposition~\ref{prop:sparse_bound} satisfies $\delta_{2\bar{k}}(\mathcal{A}) < \bar{\delta} \ll \sqrt{2}-1$; 
      
      \STATE Solve \eqref{eq:sparse_recover} for $i=1,2,\dots,n_y$ to obtain the FSPS $\mathcal{B}(H_a)$;

      \Statex \hspace{-1.2em} \emph{end Basis Pursuit Denoising}  
      \vspace{1.2mm}
      
      \Statex \hspace{-1.2em} \emph{begin MPC control process (online)} 
      \FOR{timestep $1 \leq t \leq t_\mathrm{end}$}
      
      \STATE Obtain $y(t)$ and update the FPS $\mathcal{F}(t)$ using \eqref{support};

      \STATE Estimate mean and variance $\mu_{\mathbf{a}}(t)$ and $\sigma^2_{\mathbf{a}}(t)$ with RLS estimator. Project the mean 
      
      \hspace{-1.2em} with \eqref{eq:proj} to the set $\mathcal{F}_p(t)$;
      
      \STATE Solve \eqref{mpc_problem} and apply $u(t) = u^\star(t|t)$ to system \eqref{sysmodel};
      \ENDFOR
      \end{algorithmic}
\end{algorithm}

\begin{theorem} 
Consider Algorithm~1 and the receding horizon closed-loop control law \eqref{eq:RHC} applied to system \eqref{sysmodel} after solving optimization problem \eqref{mpc_problem}. If the optimization problem \eqref{mpc_problem} is feasible at timestep $t=0$, then it is feasible at all subsequent timesteps $0 \leq t \leq t_\mathrm{end}$. 
\end{theorem}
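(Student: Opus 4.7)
The plan is to follow the now-standard recursive feasibility argument for input--output Adaptive MPC, adapted to the setting of \eqref{mpc_problem}: namely, construct a candidate input sequence at time $t+1$ from the optimizer at time $t$ and verify each constraint block in turn. Specifically, I would propose the shifted-and-held candidate
\[
\tilde{U}(t+1) = \bigl(u^\star(t+1|t),\, u^\star(t+2|t),\, \ldots,\, u^\star(t+N-1|t),\, u^\star(t+N-1|t)\bigr),
\]
where the last entry of the time-$t$ optimizer is simply replayed once more. The input constraints $Cu(k|t+1)\le g$ are then inherited elementwise from feasibility at time $t$.

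Next, I would verify the terminal regressor condition \eqref{eq:trc} at time $t+1$. Because that condition at time $t$ says $\Phi(t+N|t)$ is a fixed point of $\Phi \mapsto W\Phi + Zu(t+N-1|t)$, the FIR structure of $W,Z$ forces the last $m$ entries of the time-$t$ horizon's inputs to be identical, all equal to $u^\star(t+N-1|t)$. Under $\tilde U(t+1)$, the last $m$ inputs in the new horizon remain $m$ copies of the same value, so $\Phi(t+N+1|t+1)$ is again a fixed point of the same recursion, giving the new terminal constraint for free.

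For the chance constraints I would use two observations. First, since the receding-horizon law \eqref{eq:RHC} applies $u(t) = u^\star(t|t)$, the regressor recursion \eqref{phi_hor} yields $\Phi(k|t+1) = \Phi(k|t)$ for all $k \in \{t+1,\ldots,t+N\}$ under the candidate; and at the newly added step, the steady-state property above gives $\Phi(t+N+1|t+1) = \Phi(t+N|t)$. Second, by construction \eqref{support} the FPS is nested, $\mathcal F(t+1)\subseteq \mathcal F(t)$. Because the reformulated constraint \eqref{socp} is affine in $f^j$, satisfaction at every vertex of $\mathcal F(t)$ implies satisfaction at every point of $\mathcal F(t)$, and hence at every vertex of $\mathcal F(t+1)$. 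Combining these two observations, each time-$(t+1)$ chance constraint coincides with (or is dominated by) a time-$t$ chance constraint that is already known to hold.

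The only delicate step is this last convexity-in-$f^j$ argument, since \eqref{socp} is \emph{not} jointly convex in $(\Phi, f^j)$ and one might worry about this. The key point that makes it work is that we have fixed the regressors $\Phi(k+1|t+1)=\Phi(k+1|t)$ first, after which the residual dependence on the FPS uncertainty is affine, and vertex enumeration in $\mathcal F(t)\supseteq \mathcal F(t+1)$ propagates correctly. Once this is articulated cleanly, the existence of an admissible $\mu_a(t+1)\in \mathcal F(t+1)\cap \mathcal B(H_a)$ follows from the fact that the online projection \eqref{eq:proj} itself delivers such a point whenever the intersection is non-empty (which, conditioned on the high-probability event $H_a\in\mathcal B(H_a)$, holds for all $t$ because $H_a\in\mathcal F(t)$ always).
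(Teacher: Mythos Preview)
Your proposal is correct and follows essentially the same recursive-feasibility argument as the paper: the shifted-and-held candidate $\tilde U(t+1)$, verification of input and terminal constraints via the fixed-point structure of \eqref{eq:trc}, the regressor identities $\Phi(k|t+1)=\Phi^\star(k|t)$, and propagation of the reformulated chance constraints through the nesting $\mathcal F(t+1)\subseteq\mathcal F(t)$ by affinity in $f^j$. Your discussion of the affinity-in-$f^j$ point and of the non-emptiness of $\mathcal F(t+1)\cap\mathcal B(H_a)$ is slightly more explicit than the paper's own proof, but the route is the same.
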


\begin{proof} Let \eqref{mpc_problem} be feasible at timestep $t$ and let the corresponding optimal input sequence be $U^\star(t) = [u^\star(t|t)^\top,u^\star(t+1|t)^\top,\dots,u^\star(t+ N-1|t )^\top]^\top$. After applying control \eqref{eq:RHC} in closed-loop to \eqref{sysmodel}, consider a candidate open loop control sequence at the next timestep as
\begin{align} \label{feasinput}
   U(t+1) = [u^\star(t+1|t)^\top,\dots,u^\star(t+ N-1|t )^\top, u^\star(t + N-1|t)^\top]^\top.
\end{align}
This candidate sequence \eqref{feasinput} satisfies the input constraints \eqref{contr_con} and the terminal constraint \eqref{eq:trc} at $(t+1)$. So, using candidate sequence \eqref{feasinput} and condition \eqref{eq:trc}, we obtain
\begin{subequations} \label{phic}
\begin{align} 
& \Phi(k|t +1) = \Phi^\star(k|t), \forall k\in \{t+2,\dots,t+N\},\\
& \Phi(t +N +1|t +1) = \Phi^\star(t +N|t).
\end{align}
\end{subequations}
To show feasibility of \eqref{mpc_problem} at $(t+1)$, we finally must ensure that \eqref{socp} satisfied with \eqref{feasinput} at $(t+1)$. That is, we require for all $k\in \{t+2,\dots,t+N+1\}$
\begin{align}\label{rfc}
    \kappa_{\epsilon} \sqrt{{\bar{\Phi}}^\top(k|t+1)\Gamma{\bar{\Phi}}(k|t+1)} + \Phi^\top(k|t+1)\bar{E}f^j(t+1)-p \leq 0,
\end{align}
for all $f^j(t+1) \in \mathcal{F}(t+1)$ vertices. Now, for the chosen input sequence (\ref{feasinput}), by using (\ref{eq:trc}) and (\ref{phic}), condition (\ref{rfc}) can be expressed as
\begin{equation}\label{rfc_final_socp}
    \kappa_{\epsilon} \sqrt{{\bar{\Phi}}^{\star\top}(k|t)\Gamma{\bar{\Phi}}^\star(k|t)} + \Phi^{\star\top}(k|t)\bar{E}f^j(t+1) -p \leq 0,
\end{equation}
for all $k \in \{t+2,\dots,t+N\}$. We can now guarantee that \eqref{rfc_final_socp} will be satisfied at timestep $(t+1)$ with \eqref{feasinput}. This is due to the observation that feasible parameter set follows $\mathcal{F}(t) \supseteq \mathcal{F}(t+1)$ as new cuts (\ref{support}) are introduced at each timestep. So vertices $f^j(t+1)$ at timestep $(t+1)$ are convex combinations of the ones at $t$. Thus, control sequence \eqref{feasinput} is feasible for \eqref{mpc_problem} at timestep $(t+1)$. This completes the proof.
\end{proof}

\section{Numerical Simulations}\label{sec:simul}
We present simulation results for a simple single-input, single-output system\footnote{We choose this model for the sake of simulation simplicity. Sparsity in FIR models is well motivated typically for MIMO systems.}. We compare the performance of our Algorithm~\ref{alg1} with that from the adaptive stochastic MPC presented in \cite{bujarbaruahAdapFIR}. This performance is measured in terms of the expected closed-loop cost $\mathbb{E}[\mathcal{V}]$, where the closed-loop cost of any trajectory which is a function of the disturbance realization $\bar{\mathbf{w}} = [w(0),w(1),\dots,w(t_\mathrm{end})]$, is given by 
\begin{equation*}
    \mathcal{V}(\bar{\mathbf{w}}, \Phi(0), \mathcal{F}_p(0), \mu_\mathbf{a}(0), \sigma_\mathbf{a}(0)) = \sum \limits_{t=0}^{t_\mathrm{end}} y^\top(t) Q y(t) + (u^\star(t))^\top S u^\star(t),
\end{equation*}
where $\mathcal{F}_p(\cdot)$ for Algorithm~\ref{alg1} is obtained as in \eqref{eq:point_estimDom}, and for the algorithm in \cite{bujarbaruahAdapFIR}, $\mathcal{F}_p(\cdot) = \mathcal{F}(\cdot)$, as Remark~\ref{rem:diff_with_acc18} points out. For simulating both the algorithms, we use the parameters given in Table~\ref{table:par}, with the true system response given as $H_a = [-1, 0, 0, 0, 0, 0, 0, 0, -2, 0]$, which is $\bar{k} = 2$ sparse. 
    \begin{table}[!h]
	\renewcommand{\arraystretch}{1}
	\caption{Simulation Parameters}
	\vspace{0.5cm}
	\label{table:par}
	\centering
	 \begin{tabular}{| c c | c c |}
		\hline
		\bfseries Parameter & \bfseries Value & 
		\bfseries Parameter & \bfseries Value \\
		\hline
		$m$ & $10$ & $N$ & $12$ \\
		$t_\mathrm{end}$ & $20$ & $w$ & $\textit{U}\sim[-0.1,0.1]$  \\
		$n_u$ & $1$ & $n_y$ & $1$ \\
		$\epsilon$ & $0.1$ & $\kappa_\epsilon$ & $3$\\
		$E$ & $1$ & $p$ & $5$ \\
		$C$ & $\mathrm{diag}(1,-1)$ & $g$ & $[1,1]^\top$\\ 
		$Q$ & $\mathrm{diag}(20,20)$ & $S$ & $\mathrm{diag}(2,2)$\\ 
		$\mu_{\mathbf{a}}(0)$ & $\mathbf{1}_{10 \times 1}$ & $\sigma^2_{\mathbf{a}}(0)$ & $0.1 \times \mathbb{I}_{10}$\\ 
		$\Phi(0)$ & $0.1\times \mathbf{1}_{10\times 1}$ & $\mathcal{F}(0)$ & $-\mathbf{3}_{10\times 1} \leq H^\top \leq \mathbf{3}_{10\times 1}$\\ 
		\hline
	\end{tabular}
\end{table}
We run $100$ Monte Carlo simulations with both algorithms for $100$ randomly chosen disturbance sequences $\bar{\mathbf{w}}$. We approximate the average closed-loop cost $\mathbb{E}[\mathcal{V}]$ with the empirical average
\begin{align}\label{eq:emp_MCcost}
    \hat{\mathcal{V}}(\Phi(0),\mathcal{F}_p(0), \mu_\mathbf{a}(0), \sigma_\mathbf{a}(0)) = \frac{1}{100} \sum_{\tilde{m}=1}^{100} \mathcal{V}((\bar{\mathbf{w}})^{\star \tilde{m}}, \Phi(0),\mathcal{F}_p(0), \mu_\mathbf{a}(0), \sigma_\mathbf{a}(0)),
\end{align}
where $(\cdot)^{\star \tilde{m}}$ represents the $\tilde{m}^{\mathrm{th}}$ Monte Carlo sample. 

\subsection{Cost Comparison}
Fig.~\ref{fig:cost} shows the comparison of closed-loop cost expressed as $\mathcal{V}(\bar{\mathbf{w}}, \Phi(0), \mathcal{F}_p(0), \mu_\mathbf{a}(0), \sigma_\mathbf{a}(0)) = \sum \limits_{t=0}^{20} y^\top(t) Q y(t) + (u^\star(t))^\top S u^\star(t)$ for $100$ different Monte Carlo draws of trajectories, obtained with Algorithm~\ref{alg1} and the algorithm in \cite{bujarbaruahAdapFIR}. We see that the empirical average closed-loop cost obtained as \eqref{eq:emp_MCcost} for Algorithm~\ref{alg1} is around $30\%$ lower than the corresponding value obtained with \cite{bujarbaruahAdapFIR}. 
\begin{figure}[h]
    \centering
    \includegraphics[width=12cm]{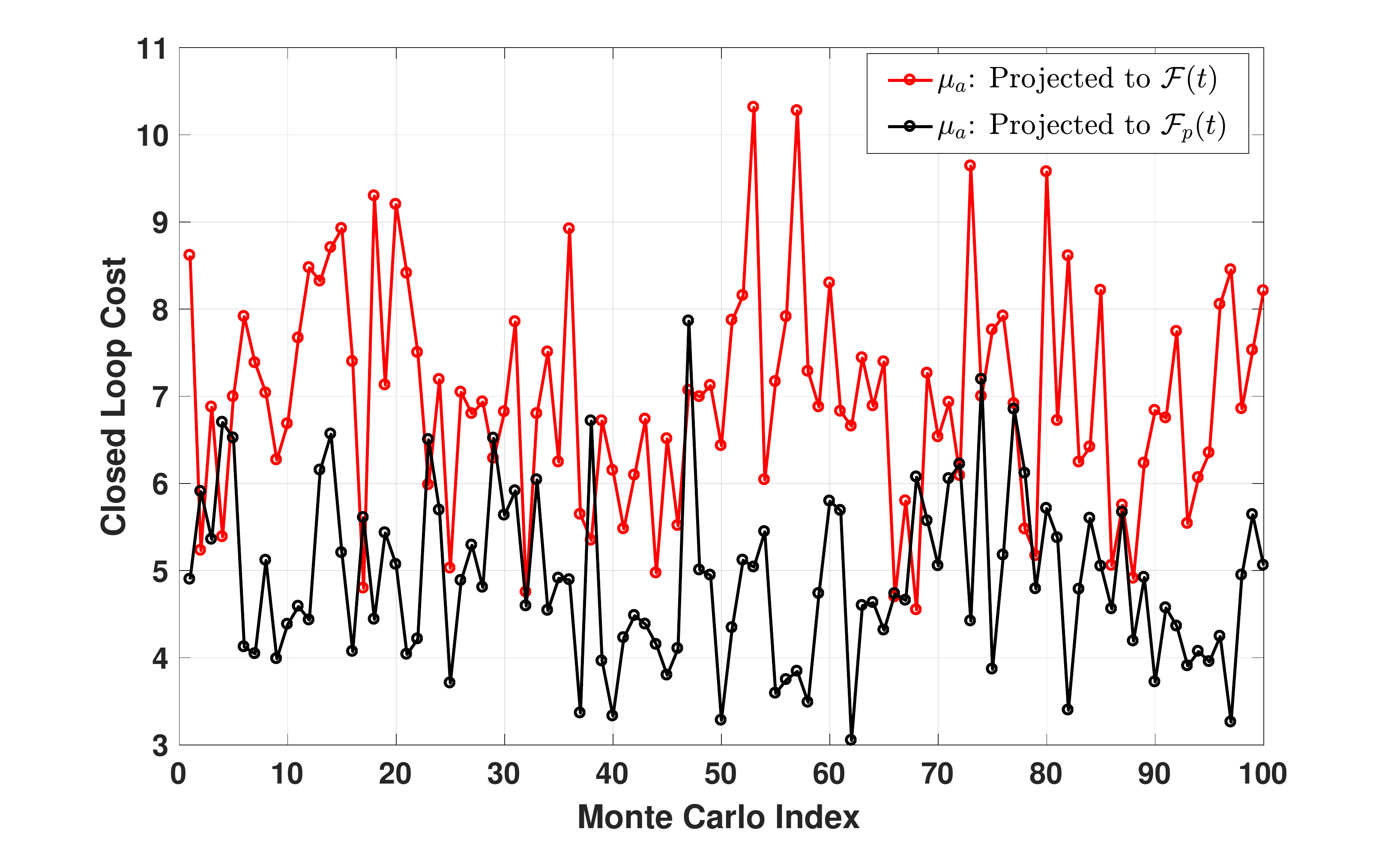}
    \caption{Closed-Loop Cost $\sum \limits_{t=0}^{20} y^\top(t) Q y(t) + (u^\star(t))^\top S u^\star(t)$ Along 100 Monte Carlo Simulation of Trajectories.}
    \label{fig:cost}
\end{figure}
This demonstrates performance gain by Algorithm~\ref{alg1} as a consequence of leveraging sparsity information of $H_a$ via the FSPS set $\mathcal{B}(H_a)$. Furthermore, the closed-loop costs obtained for $90\%$ out of the $100$ trajectories were lower with Algorithm~\ref{alg1} compared to the corresponding cost obtained with \cite{bujarbaruahAdapFIR}. This improvement is cost is explained in the next section via a closer look at the corresponding trajectories of the system.

\subsection{Closed-loop Trajectory Comparison}
In order to further analyze the performance gain seen above, we consider the difference in absolute value of the outputs along each one of 100 trajectories (Fig.~\ref{fig:out}). Specifically, we analyze
\begin{align*}
    \Delta \vert y(t) \vert = \vert y^1(t) \vert - \vert y^2(t) \vert,  
\end{align*}
for $t \in [0,20]$, where $y^2(\cdot)$ and $y^1(\cdot)$ denote the outputs obtained using Algorithm~\ref{alg1} and the algorithm in \cite{bujarbaruahAdapFIR} respectively. We see from Fig.~\ref{fig:out} that out of all $2000$ timesteps simulated, only at about $30\%$ of them is $\Delta \vert y(t) \vert <0$. This implies that at about $70\%$ of all timesteps simulated, Algorithm~\ref{alg1} resulted in outputs closer to the origin, as desired.  
\begin{figure}[h]
    \centering
    \includegraphics[width=12cm]{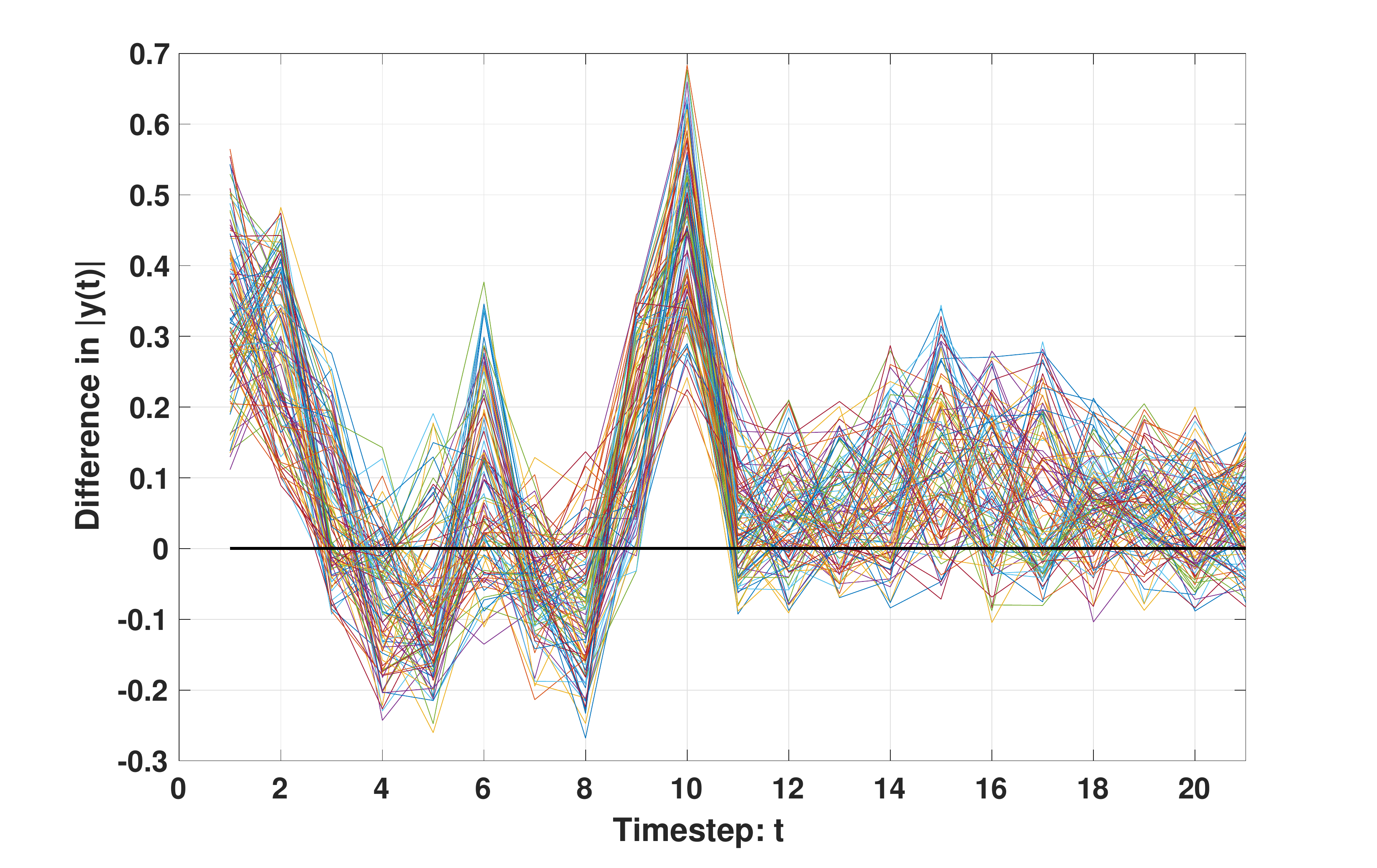}
    \caption{Difference in Closed-Loop Outputs $\Delta \vert y(t) \vert$ Along 100 Monte Carlo Simulation of Trajectories.}
    \label{fig:out}
\end{figure}

Fig.~\ref{fig:in} further plots the difference in absolute value of corresponding closed-loop inputs, i.e., 
\begin{align*}
\Delta \vert u^\star(t) \vert = \vert u^{\star,1}(t) \vert - \vert u^{\star,2}(t) \vert, 
\end{align*}
for $t \in [0,20]$ along all the 100 trajectories, where $u^{\star,2}(\cdot)$ and $u^{\star,1}(\cdot)$ denote the optimal closed-loop inputs obtained using Algorithm~\ref{alg1} and the algorithm in \cite{bujarbaruahAdapFIR} respectively. At only $19\%$ of the 2000 timesteps  $\Delta \vert u^\star(t) \vert <0$, which indicates that the improved output regulation in Fig.~\ref{fig:out} does not come at the cost of higher magnitudes of inputs. Thus, Fig.~\ref{fig:out} and Fig.~\ref{fig:in} combined, provides a comprehensive explanation of cost improvement in Fig.~\ref{fig:cost}. 
\begin{figure}[h]
    \centering
    \includegraphics[width=12cm]{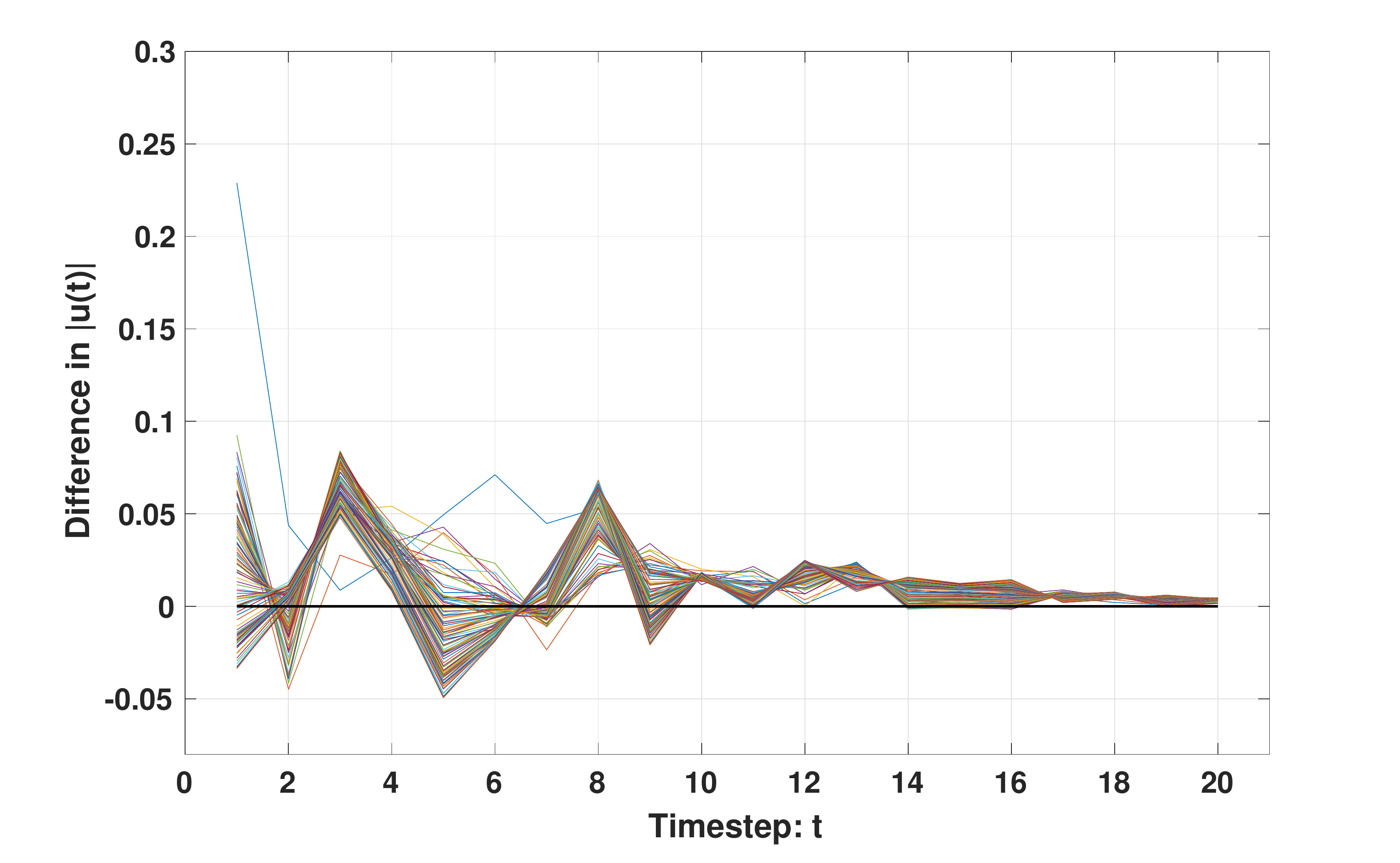}
    \caption{Difference in Optimal Closed-Loop Inputs $\Delta \vert u^\star(t) \vert$ Along 100 Monte Carlo Simulation of Trajectories.}
    \label{fig:in}
\end{figure}

\section{Conclusions}
We proposed an Adaptive Stochastic Model Predictive Control (MPC) algorithm for stable linear time-invariant MIMO systems expressed with an FIR model. The parameters of the FIR model corresponding to each output are sparse, but unknown.  We solve a Basis Pursuit Denoising problem offline before the control process, which utilizes the sparsity information of the system model to give set based bounds containing the true system parameters. Online during control process, we estimate these parameters with the Recursive Least Squares algorithm and the estimates are refined using the set based bounds obtained offline. 
With Set Membership Methods, our MPC controller safely ensures satisfaction of all imposed constraints by the unknown true system.
With a thorough numerical example we demonstrated performance gain over the algorithm of \cite{bujarbaruahAdapFIR}.

\section*{Appendix}
\subsection*{Choosing Offline Regressors}
Following \cite[Theorem~3.3.4]{YiMaBook}, there exists a numerical constant $\tilde{C} > 0$ such that if $\mathcal{A}= [\Phi_1, \Phi_2, \dots, \Phi_q]^\top \in \mathbb{R}^{q \times n_um}$ is a random matrix with entries independent $\mathcal{N}(0,\frac{1}{q})$ random variables, with high probability, $\delta_{2\bar{k}}(\mathcal{A}) < \bar{\delta} \ll \sqrt{2} -1$, provided 
\begin{align}\label{eq:q_count}
q \geq \frac{2\tilde{C}\bar{k} \log(\frac{n_um}{2\bar{k}})}{\bar{\delta}^2}. \end{align}
Offline regressor vectors $[\Phi_1, \Phi_2, \dots, \Phi_q]$ are thus chosen satisfying \eqref{eq:q_count}, with each entry of $\mathcal{A}$ as $\mathcal{N}(0,\frac{1}{q})$. The constant $\tilde{C}$ can be exactly found using properties of phase transition in compressed sensing \cite{donoho2006compressed}. 

\subsection*{Matrix Notations}
We define
\begin{align*}
    &\mathbf{H}(t) = [H_{1}, H_{2}, \dots, H_{n_y}]^\top \in \mathbb{R}^{n_yn_um \times 1},\\
    &\bm{\Phi}(t)  = \mathrm{diag}(\Phi^\top(t),\Phi^\top(t),\dots,\Phi^\top(t))\in \mathbb{R}^{n_y \times n_yn_um},
\end{align*}
where $H_i(t)$ denotes the $i^\mathrm{th}$ row of $H(t)$. Moreover, 
\begin{align*}
\bar{q} & = \left[\begin{array}{ccccc}
0&0&\cdots &0&0\\
1&0&\cdots &0&0\\
0 &1 &\cdots &0&0\\
\vdots &\vdots &\ddots &\vdots&\vdots\\
0&0&\cdots &1&0\\
\end{array}\right] \in \mathbb{R}^{m \times m}
\end{align*}
Based on this matrix $\bar{q}$ we get: 
\begin{align*}
& W  =  \mathrm{diag}(\bar{q},\bar{q},\dots,\bar{q}) \in \mathbb{R}^{n_um \times n_um},\\
& z = [1, 0, \dots, 0]^\top \in \mathbb{R}^m, \end{align*}
which gives $Z=  \mathrm{diag}(z,z,\dots,z) \in \mathbb{R}^{n_um \times n_u}$.

\subsection*{Chance Constraint to Conic Optimization}
For this part we repeat the derivation of \cite{bujarbaruahAdapFIR}. To ensure satisfaction of \eqref{eq:cc1} with unknown true system $H_a$, we must satisfy them for all $H(t) \in \mathcal{F}(t)$. Thus $H(t)$ is treated as a deterministic variable (mean $H(t)$ and variance $\mathbf{0}$) while reformulating \eqref{eq:cc1} to equivalent convex constraints. Therefore, for all $H(t) \in \mathcal F(t)$, we have:
\begin{align*}
\mathbb{P}\{EH(t)\Phi(k|t) + Ew(k) \leq p\} & \geq 1-\epsilon\\
\iff \mathbb{P}\{[EH(t) \quad Ew(k)][\Phi^\top(k|t) \quad 1]^\top \leq p\} & \geq 1-\epsilon. 
\end{align*}
We denote, 
\begin{align*}
 a_1^\top(t) & = [EH(t) \quad Ew(k)], \implies \hat{a}_1^\top(t) = [EH(t) \quad 0], 
\end{align*}
where $\hat{x}$ denotes the mean of a quantity $x$. Moreover,
\begin{align*}
 \bar{\Phi}(k|t) & = [\Phi^\top(k|t) \quad 1 \quad 1]^\top,~\textrm{and,}
\end{align*}
\begin{align*}
 d_1(t) & = [a_1^\top(t) \quad -p]^\top, \implies \hat{d}_1(t) = [\hat{a}_1^\top(t) \quad -p]^\top.
\end{align*}
From these we can derive the variance $\Gamma$ of $d_1(t)$ as
\begin{align*}
     \Gamma & = \sigma^2(d_1(t)) = \sigma^2(\begin{bmatrix}
         \bar{E}{\mathbf{H}(t)} \\
         E {w}(k)\\
         -p\\
        \end{bmatrix}),~\text{where }
     \bar E = \mathrm{diag}(E,E,\dots,E),\\
& = \mathrm{diag}(0, E\sigma^2_wE^\top,0) \succeq 0.
\end{align*}
As in \cite{bujarbaruahAdapFIR}, we assume no correlation between the disturbance and the impulse response distribution. Now, \eqref{eq:cc1} is reformulated into second-order cone constraints following \cite{calafiore2006distributionally}
\begin{equation}\label{socp11}
\kappa_{\epsilon} \sqrt{\{{\bar{\Phi}}^\top(k|t)\Gamma{\bar{\Phi}}(k|t)}\} + \hat{d}_1^\top(t){\bar{\Phi}}(k|t) \leq 0,
\end{equation}
for all $k\in \{t+1,\dots,t+N\}$, where $\kappa_\epsilon = \sqrt{\frac{1-\epsilon}{\epsilon}}$ for any bounded disturbance distributions $w(k)$ with known moments. After simplifications, \eqref{socp11} can be written for all $H(t) \in \mathcal F(t)$ as
\begin{equation*}
\kappa_{\epsilon} \sqrt{\{{\bar{\Phi}}^\top(k|t)\Gamma{\bar{\Phi}}(k|t)}\} + \Phi^\top(k|t)\bar{E}\mathbf{H}(t) -p \leq 0.
\end{equation*}
\bibliographystyle{IEEEtran} 
\bibliography{bibliography.bib}

\end{document}